\newtheorem{theorem}{Theorem}
\newtheorem{lemma}[theorem]{Lemma}
\newtheorem{proposition}[theorem]{Proposition}
\newtheorem{definition}{Definition}
\newtheorem{example}{Example}
\newcommand{\BibTeX}{B\kern-.05em{\sc i\kern-.025em b}\kern-.08em\TeX}
\begin{document}


\begin{frontmatter}


\paperid{1687} 


\title{Optimal Diffusion Auctions}

\author{\fnms{Yao}~\snm{Zhang}}
\author{\fnms{Shanshan}~\snm{Zheng}}
\author{\fnms{Dengji}~\snm{Zhao}} 

\address{ShanghaiTech University, Shanghai, China}


\begin{abstract}
Diffusion auction design is a new trend in mechanism design for which the main goal is to incentivize existing buyers to invite new buyers, who are their neighbors on a social network, to join an auction even though they are competitors. With more buyers, a diffusion auction will be able to give a more efficient allocation and receive higher revenue. Existing studies have proposed many interesting diffusion auctions to attract more buyers, but the seller's revenue is not optimized. Hence, in this study, we investigate what optimal revenue the seller can achieve by attracting more buyers. Different from the traditional setting, the revenue that can be achieved in a diffusion auction highly relies on the structure of the network. Hence, we focus on optimal auctions with given classes of underlying networks. We propose a class of mechanisms, where for any given structure, an optimal diffusion mechanism can be found. We point out that it implies an idea of "reserve structure". Moreover, we show that an optimal mechanism that handles all structures does not exist. Therefore, we also propose mechanisms that have bounded approximations of the optimal revenue in all structures.
\end{abstract}

\end{frontmatter}


\section{Introduction}

Single-item auction is a classic mechanism design problem, where a seller sells an item to a group of buyers~\cite{vickrey1961counterspeculation}. Traditionally, the group of buyers is assumed to be known to the seller, and the seller can only maximize efficiency or revenue among them. 
Recently, researchers started to model the connections between buyers and utilize their connections to attract more buyers to join the auction~\cite{zhao2021mechanism}. It is beneficial because we can further improve the seller's revenue. It has been shown that adding one more buyer to a second-price auction is more valuable than using the optimal mechanism among the original buyers~\cite{bulow1996auctions}. To attract more buyers to an auction through their connections, a new approach has been proposed by incentivizing existing buyers to invite their neighbors,
which is called \emph{diffusion auction design}~\cite{li2022diffusion}. The challenge in this new design is that buyers would not invite each other by default as they are competing for the same item, which is called the Referrer's Dilemma in~\cite{jeong2020referrer}. 

The model of diffusion auction design was mathematically initiated in~\cite{li2017mechanism}, which treats the invitation as reporting neighbors and makes a buyer's neighbor set a part of her type. It also demonstrated that the classic VCG mechanism~\cite{vickrey1961counterspeculation,clarke1971multipart,groves1973incentives} can be extended to incentivize buyers to invite their neighbors, but it will give a deficit to the seller. Then, they proposed an incentive-compatible (i.e., all buyers are incentivized to report their valuations truthfully and propagate the auction information to all their neighbors) mechanism 
called the Information Diffusion Mechanism (IDM), which guarantees a higher revenue than holding VCG among the seller's direct neighbors. Later, \citeauthor{LiHZY19Diffusion}~[\citeyear{LiHZY19Diffusion}] further proposed a class of mechanisms called the Critical Diffusion Mechanisms (CDM) that contains IDM. However, these mechanisms are not designed to optimize the seller's revenue and it is not clear what maximal revenue we can achieve under this new setting. 
Therefore, in this paper, we focus on the optimal revenue of the seller. More precisely, given prior distributions of the buyers' valuations, we search for an incentive-compatible (IC) and individually rational (IR) mechanism to maximize the expected revenue.

The well-known optimal mechanism for traditional single-item auction with given prior type distributions, Myerson's mechanism~\cite{myerson1981optimal}, is not IC here as it does not consider the connections between buyers. The main difficulty is that the revenue of a diffusion auction may highly depend on the network structure. Hence, the optimality should also be defined along with structures. Naturally, optimizing the expected revenue over the induced type profile distribution is the ultimate goal. However, in diffusion auctions, buyers’ types are not typically multi-parametric since their neighbor sets are not numerical and independent. Hence, in this paper, we take the expected revenue over the valuation part with given classes of underlying networks. This optimality is also in line with the notion of optimality in traditional settings. There are also other ways to simplify the problem. For example, to define the expected revenue for a specific class of mechanisms (so that the dimension of the structure can be eliminated)~\cite{Bhattacharyya2022OptimalRA}. Therefore, our work is an essential step toward comprehending the issue.

For the above purpose, we propose a class of IC and IR diffusion auctions called the $k$-Partial Winner of Myerson's ($k$-PWM), 
where $k$ is a parameter associated with the class of underlying structures. 
The key component of $k$-PWM is the concept of \emph{potential winners}. A potential winner is a winner under Myerson's mechanism if she does not invite anyone. 
By characterizing these special buyers, this class can give an optimal mechanism for any specific class of underlying structures. Although a $k$-PWM gives the optimal revenue for its target structures, it may get zero revenue for the other structures. Moreover and unfortunately, we show that the existence of $k$-PWM implies that an optimal mechanism over all structures does not exist, i.e., no mechanism can have higher expected revenue than others in all underlying structures. Therefore, our next goal is to find a mechanism to approximate the optimal revenue under all structures, which could not be achieved with any of the existing mechanisms.

To better approximate the optimal revenue, we propose a mechanism called the \emph{Closest Winner of Myerson's} (CWM) and a general class called the \emph{CWM with Shifted Reserve Prices} (CWM-SRP). As a base mechanism in this class, 
CWM simply allocates the item to the potential winner who has the shortest distance to the seller; other variants in the class will give more opportunities to buyers who are far from the seller by increasing the reserve prices of buyers closer to the seller. The CWM also has a closed-form theoretical bounded approximation of the optimal revenue in the worst case among all structures with fixed distributions of the bidders' valuations.

\subsection{Other Related Work}
There is a rich literature following~\cite{li2017mechanism} by proposing different diffusion mechanisms for different settings~\cite{zhao2021mechanism}. For example, \cite{LiHZZ18Customer} considered a customer-sharing model where efficiency can be achieved; \cite{ZhaoLXHJ18Selling} and \cite{KawasakiBTTY20Strategy} extended IDM to a multi-unit setting where each buyer can only bid for one unit. On the other hand, \cite{LiHZY19Diffusion} illustrated a class of diffusion mechanisms with IDM being one member of it, and \cite{ZhangZC20Redistribution} utilized the class and gave a redistribution mechanism for diffusion auction to return the revenue to more buyers. More importantly, \cite{ijcai2020-33} further characterized a sufficient and necessary condition for all incentive-compatible diffusion auctions and showed that a mechanism cannot achieve incentive compatibility, individual rationality, efficiency, and weak budget balance simultaneously. \cite{Lee16Referrals} gave the Multilevel Mechanism that achieves efficiency but sacrifices incentive compatibility (i.e., buyers have diffusion incentives but may not reveal their true valuations). A recent review of these studies can be found in~\cite{DBLP:conf/ijcai/GuoH21}. Different from the above, our work focuses on revenue maximization in diffusion auctions.


\section{Preliminaries}\label{sec:model}
\subsection{The Model}
We consider an auction where a seller $s$ sells one item on a social network. The social network contains the seller $s$ and a buyer set $N=\{1,2, \dots, n\}$. Each buyer $i\in N$ has a private type of $t_i = (v_i, r_i)$, where $v_i$ is her valuation of the item satisfying $v_i\in [\underline{v}, \overline{v}]$ ($\underline{v}$ and $\overline{v}$ are public), and $r_i \subseteq N\setminus \{i\}$ is the set of all her direct neighbors. Let $r_s\subseteq N$ represent the direct neighbors of the seller. The seller or a buyer can only diffuse the information about the auction to her direct neighbors. Let $T_i$ be the type space of buyer $i$ and $T =\times_{i\in N} T_i$ be the type space of all buyers. In an auction mechanism, each buyer $i$ is asked to report her type, and her report is denoted by $t_i' = (v_i', r_i')$. A buyer $i$ can report any valuation $v_i'\in [\underline{v}, \overline{v}]$ and any subset of her neighbors $r_i'\subseteq r_i$. Denote the report profile of all buyers by $t' = (t_1', \dots, t_n')$. Let $t_{-i}'$ be the report profile of all buyers except for $i$, and $t'$ can be written as $(t_i', t_{-i}')$. 

We focus on the scenario where the seller wants to promote the sale in the network. Initially, the seller only knows her neighbors $r_s$, 
and can only notify them of the auction. 
The goal is to incentivize the buyers who are aware of the sale to further invite their neighbors to join the auction. Eventually, the seller can sell the item to those who are finally informed of the auction. Hence, only buyers who can be reached by the seller via a sequence of neighbor declarations are valid buyers, whereas reporting the neighbor set is equivalent to inviting them in practice. In reality, buyers who are not informed of the sale cannot participate in the sale. We keep their reports in the notations only to simplify the definitions.

\begin{definition}
Given a report profile $t'$, we call buyer $i$ a \textbf{valid buyer} if there exists a sequence of buyers $(i_1, i_2, \dots, i_k)$ satisfying that $i_1\in r_s$, $i_j\in r_{i_{j-1}}'$ for $1<j\leq k$, and $i \in r_{i_k}'$. Let $V(t')$ be the set of all valid buyers in the report profile $t'$.
\end{definition}

\begin{definition}
An \textbf{auction mechanism} $M$ is composed of an allocation policy $\pi = \{\pi_i\}_{i\in N}$ and a payment policy $p = \{p_i\}_{i\in N}$, where $\pi_i: T \rightarrow \{0,1\}$ and $p_i: T \rightarrow \mathbb{R}$ are the allocation and payment for $i$ respectively. A mechanism $M = (\pi, p)$ is a \textbf{diffusion auction mechanism} if it satisfies that for all report profiles $t'\in T$,
\begin{itemize}
    \item for all buyers $i\in N\setminus V(t')$, $\pi_i(t') = 0$ and $p_i(t') = 0$;
    \item for all buyers $i\in V(t')$, $\pi_i(t')$ and $p_i(t')$ are independent of the reports of buyers in $N\setminus V(t')$.
\end{itemize}
\end{definition}

In the allocation policy, $\pi_i(t') = 1$ represents that the item is allocated to $i$ while $\pi_i(t') = 0$ represents the opposite. Since we only have one item to sell, we require $\sum_{i\in N} \pi_i(t') \leq 1$ for all $t'\in T$. Given a report profile $t'$ and a buyer $i$'s true type $t_i = (v_i, r_i)$, her utility in $M=(\pi, p)$ is
\[ u_i(t_i, t', (\pi, p)) = \pi_i(t')v_i - p_i(t'). \]
Then, one required property for the mechanism is that each buyer's utility will be non-negative when she reports her type truthfully, i.e., buyers are not forced to join the mechanism.

\begin{definition}
A diffusion auction mechanism $(\pi, p)$ is individually rational (IR) if for all $i\in N$, $t_i\in T_{i}$ and $t'_{-i} \in T_{-i}$,
$ u_i(t_i, ((t_i,t'_{-i}), (\pi, p)) ) \geq 0$,
where $T_{-i}$ is the space of $t'_{-i}$.
\end{definition}

In the next property, we want to ensure that for all buyers, reporting their true type is a dominant strategy.

\begin{definition}
A diffusion auction mechanism $(\pi, p)$ is incentive compatible (IC) if for all $i\in N$, $t_i, t_i'\in T_{i}$ and $t'_{-i} \in T_{-i}$, $u_i(t_i, (t_i, t'_{-i}), (\pi, p)) \geq u_i(t_i, (t'_i, t'_{-i}), (\pi, p))$.
\end{definition}

In this paper, our goal is to find an IR and IC diffusion mechanism that maximizes the seller's expected revenue given the prior distributions of agents' valuations and specific classes of structures. 
An underlying structure can be defined by the neighbor sets $r_1$, $r_2$, $\dots$, $r_n$ including $r_s$. Denote $r=(r_s,r_1,r_2,\dots , r_n)$ as the \textbf{structure profile}. Let $R_k$ be the space of all connected structure profiles of $k$ buyers (i.e., at most $k$ valid buyers by their report), and $R = \bigcup_{k\in \mathbb{N}^*} R_k$ be the space of all connected structure profiles. Given the structure profile $r$ and agents' valuations $v$, the revenue of an IC and IR mechanism $M = (\pi, p)$ is $rev^r_M(v) = \sum_{i\in N} p_i((v_i, r_i)_{i\in N})$. Then, we define optimal diffusion auctions as follows.

\begin{definition}
Given a set of structure profiles $S \subseteq R$, An IC and IR diffusion auction mechanism $M=(\pi, p)$ is \textbf{optimal over $S$} if for any structure profile $r \in S$, and any other IC and IR mechanism $M'$, 
\[ \mathbb{E}_{\{v_i\}\sim \{F_i\}} [rev^r_M(v)] \geq \mathbb{E}_{\{v_i\}\sim \{F_i\}} [rev^r_{M'}(v)]  \]
where the buyers' valuations are drawn independently from the prior distributions $\{F_i\}_{i\in N}$\footnote{We allow heterogeneous prior distributions for theoretical completeness, but in reality, since the seller does not know the existence of buyers, identical priors may more common in applications.}. $F_i$ is the cumulative distribution function (c.d.f.) over $[\underline{v}, \overline{v}]$.
\end{definition}

Briefly, given any structure profile $r \in S$, if the expected revenue of an IC and IR mechanism $M$ is higher than that of any other IC and IR mechanisms, then $M$ is optimal over $S$. An ideal mechanism can be optimal over $R$.

\subsection{Myerson's Mechanism}
In single-item auction design without diffusion, Myerson has proposed an optimal solution for revenue~\cite{myerson1981optimal}. We describe it here and it will be used in our design. It first defines the virtual bids of buyers.

\begin{definition}
For any buyer $i$, if the prior distribution of her valuation is $F_i$ and she reports $v_i'$, 
define her virtual bid as
\[ \tilde{v}_i = \phi_i(v_i') = v_i' - (1-F_i(v_i'))/f_i(v_i') \]
where $f_i$ is the probability density function (p.d.f.) of $F_i$.
\end{definition}

\begin{framed}
\noindent \textbf{Myerson's Mechanism}

\noindent \rule{\textwidth}{0.5pt}

\noindent \textsc{Input:} a set of buyers $N$ and 
valuation reports $\{v_i'\}$.
\begin{enumerate}
\item Denote the buyer with the highest non-negative virtual bid (with lexicographic tie-break) by $j$ and set $\pi_j = 1$ and $\pi_{i\neq j} = 0$; if such a buyer does not exist, set $\pi_i=0$ and $p_i=0$ for all $i$ and \textsf{goto} \textsc{Output}.
\item Set $p_j = \phi_j^{-1}(\max\left\{0, \max_{i\neq j} \tilde{v}_i \right\})$ and $p_{i\neq j} = 0$.
\end{enumerate}
\noindent \textsc{Output:} the allocation $\pi$ and the payment $p$.
\end{framed}

Intuitively, the allocation of Myerson's mechanism maximizes social welfare under virtual bids, and the payment is the lowest bid for the winner to keep winning. Under the assumption of monotone non-decreasing hazard rates ($f_i(z)/(1-F_i(z))$), Myerson's mechanism is IR, IC, and optimal in the fixed buyer set setting~\cite{myerson1981optimal}.

\begin{lemma}\label{lem:mm}
If for any $i$, the hazard rate of her valuation distribution $f_i(z)/(1-F_i(z))$\footnote{In the rest of the paper, we always assume monotone non-decreasing hazard rates without explicitly stating it for convenience.} is monotone non-decreasing, then Myerson's mechanism is IR, IC and optimal in the single-item auction without diffusion.
\end{lemma}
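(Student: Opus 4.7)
The plan is to invoke Myerson's standard characterization of incentive-compatible single-item mechanisms together with his virtual-value revenue identity. First I would prove IR and IC. For IC, the standard characterization says that a direct mechanism is IC if and only if the allocation rule is monotone non-decreasing in each bidder's own valuation and the payment is the associated threshold payment (fixing losers at zero). The allocation rule here awards the item to the bidder with the highest non-negative virtual value $\tilde{v}_i = \phi_i(v_i')$; the monotone hazard rate assumption guarantees that each $\phi_i$ is strictly increasing, so the allocation is monotone in each bidder's own bid. The payment $p_j = \phi_j^{-1}(\tilde{p}_j)$ is exactly the infimum of bids at which the winner $j$ would still win, i.e., the threshold payment, so IR is immediate: a winner pays at most her own bid, and losers pay zero.

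For optimality, I would apply Myerson's revenue lemma. For any IC and IR mechanism, an integration-by-parts argument on the envelope/payment identity yields, for each bidder $i$,
\[
\mathbb{E}_{v_i}\!\left[p_i(v_i, v_{-i})\right] = \mathbb{E}_{v_i}\!\left[\phi_i(v_i)\,\pi_i(v_i, v_{-i})\right].
\]
Summing over $i$ and taking expectations over all valuations, the expected revenue equals the expected \emph{virtual welfare} $\mathbb{E}\!\left[\sum_i \phi_i(v_i)\,\pi_i(v)\right]$. Hence revenue is maximized by a pointwise rule that, for every valuation profile, gives the item to the bidder with the largest non-negative virtual value — which is precisely Myerson's allocation. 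Because the monotone hazard rate hypothesis makes each $\phi_i$ monotone, this pointwise optimal rule is itself monotone in each bidder's own bid, hence implementable, so optimizing within IC/IR coincides with unconstrained pointwise optimization of virtual welfare.

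The main obstacle is Myerson's payment identity that turns expected revenue into expected virtual surplus; this is the only nontrivial analytical step, obtained by differentiating the interim utility along the envelope formula and integrating by parts using the distribution $F_i$. Everything else — monotonicity of the allocation, threshold payment giving IR, and pointwise maximization — follows directly. The monotone hazard rate assumption appears in exactly one place: it ensures that the pointwise maximizer of virtual welfare is a monotone allocation rule, so that the unconstrained optimum is feasible within the IC class and no ironing is required.
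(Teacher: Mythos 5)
The paper gives no proof of this lemma at all: it is stated as a known result and attributed directly to Myerson (1981), so the ``paper's proof'' is simply the classical one. Your proposal correctly reconstructs that standard argument --- the monotone-allocation/threshold-payment characterization of IC (with IR following from threshold payments), the payment identity turning expected revenue into expected virtual welfare, and the observation that the monotone hazard rate condition makes each $\phi_i$ increasing so the pointwise virtual-welfare maximizer is monotone and hence implementable without ironing --- so it matches the cited proof in both substance and structure.
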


It is not hard to observe that directly applying Myerson's mechanism as a diffusion auction mechanism is not IC because one can make her competitors absent from the auction by not inviting them. Although it comes to a failure, the revenue of Myerson's mechanism gives us an upper bound (which may not be tight now) of the expected revenue.

\begin{proposition}\label{prop:upperbound}
For all diffusion auctions with a seller $s$ and $n$ valid buyers whose valuations are drawn from 
prior distributions independently, the seller's expected revenue under all IC and IR diffusion mechanisms will not exceed 
that under Myerson's mechanism with $n$ buyers without diffusion.
\end{proposition}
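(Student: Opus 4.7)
The plan is to reduce the diffusion setting to the classical setting by restricting attention to truthful neighbor reports, and then invoke Lemma~\ref{lem:mm}. The key observation is that the space of IC and IR diffusion mechanisms is strictly more constrained than the space of classical IC and IR single-item mechanisms on the valid buyer set, so any revenue achievable in the former must be achievable in the latter, and hence bounded by Myerson's.

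First, I would fix an arbitrary IC and IR diffusion mechanism $M = (\pi, p)$ together with a true structure profile $r$ in which all $n$ buyers are valid. From $M$ I would build an induced classical mechanism $\hat{M} = (\hat{\pi}, \hat{p})$ on $N$ by hard-wiring each buyer's neighbor report to her true set and letting the allocation and payment depend only on the reported valuations:
\[ \hat{\pi}_i(v') := \pi_i\bigl((v'_j, r_j)_{j \in N}\bigr), \quad \hat{p}_i(v') := p_i\bigl((v'_j, r_j)_{j \in N}\bigr). \]
Because all $n$ buyers are valid under the true structure, the mechanism $\hat{M}$ is well-defined as a classical single-item mechanism on $N$.

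Second, I would verify that $\hat{M}$ inherits classical IR and IC. IR transfers directly: for any $v_i$, setting $v_i' = v_i$ in $\hat{M}$ corresponds to truthful reporting of $t_i = (v_i, r_i)$ in $M$, which gives non-negative utility by diffusion IR. For IC, suppose for contradiction that some buyer $i$ could strictly benefit in $\hat{M}$ by reporting $v_i' \ne v_i$ while others report truthfully. Then in $M$ the same buyer, holding $r_i' = r_i$ fixed, could strictly benefit by reporting $t_i' = (v_i', r_i)$, contradicting the diffusion IC of $M$. Thus $\hat{M}$ is a classical IC and IR single-item mechanism on $n$ buyers whose valuations are independently drawn from $\{F_i\}$.

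Third, I would invoke Lemma~\ref{lem:mm}: under the monotone hazard rate assumption, Myerson's mechanism is optimal over the class of all classical IC and IR single-item mechanisms on $n$ buyers. Consequently the expected revenue of $\hat{M}$ is at most that of Myerson's mechanism run on these $n$ buyers. Since the expected revenue of $M$ under truthful reporting equals the expected revenue of $\hat{M}$, and the Myerson benchmark in the statement is exactly Myerson's mechanism treating all $n$ buyers as members of $r_s$, the bound follows.

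The only subtlety — and the step I expect to require the most care — is the verification that diffusion IC implies classical IC of $\hat{M}$. This hinges on the fact that truthful neighbor reports $r_i' = r_i$ are always available to each buyer in the diffusion setting, so any pure valuation deviation in $\hat{M}$ is mimicked by a legal report in $M$. Once this reduction is justified, the upper bound is immediate from Myerson's optimality.
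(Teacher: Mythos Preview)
Your proof is correct and rests on the same key insight as the paper's: once the network structure is held fixed, an IC and IR diffusion mechanism becomes a classical IC and IR single-item mechanism, and Myerson's optimality (Lemma~\ref{lem:mm}) then caps its revenue. The packaging, however, differs. The paper argues by contradiction: it assumes some diffusion mechanism $M$ beats Myerson at some structure in $R_n$, then builds a classical mechanism $M'$ by \emph{enumerating all} structure profiles in $R_n$, simulating $M$ on each, and selecting the one with highest expected revenue; since $M'$ is classical IC/IR and beats Myerson, contradiction. Your direct reduction---fix the given structure $r$, hard-wire each $r_i' = r_i$, and compare the induced $\hat{M}$ to Myerson---is cleaner and strictly more elementary: the paper's enumeration over $R_n$ is a detour that adds nothing, because the contradiction already singles out one offending structure that could be hard-wired directly. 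What the paper's phrasing does make slightly more explicit is that the bound holds uniformly over all of $R_n$, but your argument yields this too since $r$ is arbitrary.
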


\begin{proof}
This statement can be proved by contradiction. Given prior distributions, if there is an IC and IR diffusion auction mechanism $M$ and such that the seller's expected revenue in at least one structure profile in $R_n$ exceeds the expected revenue of applying Myerson's mechanism among $n$ buyers in the setting without diffusion, then we can construct a mechanism $M'$ without diffusion as follows. For $n$ valid buyers, $M'$ enumerates all possible structure profiles $r \in R_n$.
It records the allocation and payment policies when applying $M$ in each 
$r$ and finally sets the allocation and payment policies to the same as the one with the highest expected revenue for given prior distributions. 
Because the space of all structure profiles is finite for a given set $N$, the enumeration has an end.

Then, $M'$ is still IC and IR for auction without diffusion since buyers cannot affect the structure profiles enumerated and for each structure profile, $M$ is IC and IR. According to the property of $M$, $M'$ will have higher expected revenue than Myerson's mechanism, 
which leads to a contradiction with Lemma~\ref{lem:mm}. 
Therefore, such a mechanism $M$ cannot exist.
\end{proof}


\section{Optimal Diffusion Mechanisms}\label{sec:local}
In this section, we will design a class of optimal diffusion mechanisms over $R_k$ for any given $k$. The main difficulty in designing an optimal mechanism for specific structures is that although we only care about the expected revenue at specific structures, the mechanism should work for any structure. Concretely, the mechanism should be IC, which means that \textbf{the outputs on other non-target structures will not give buyers a chance to cheat for higher utilities}. 

\subsection{Local Myerson's Mechanism}
Before proposing our class of mechanisms, we first examine a trivial solution and some intuition about how to extend it. The idea is to run Myerson's mechanism only among the seller's neighbors. It is an IC and IR diffusion mechanism and is optimal over the structure profiles where $r_s = N$, e.g., the one with two buyers shown in Figure~\ref{fig:2-4buyers}(a). When there are three buyers with the structure shown in Figure~\ref{fig:2-4buyers}(b), it cannot be optimal any longer. For optimizing the revenue in such a case, one may consider a mechanism that only runs Myerson's mechanism among all buyers when there are three buyers or two layers of buyers. Such a method may face the issue of incentive compatibility. For example, in the case shown in Figure~\ref{fig:2-4buyers}(c), buyer 1 can block buyer 3's participation to be the winner. 

\begin{figure}[!htbp]
\centering
\begin{tikzpicture}[scale=.75]
\filldraw (2,-3) circle (1pt) node[align=center, above] {$s$};
\filldraw (1,-4) circle (1pt) node[align=center, left] {$\tilde{v}_1=4$};
\filldraw (3,-4) circle (1pt) node[align=center, right] {$\tilde{v}_2=2$};
\draw (2,-3) -- (1,-4);
\draw (2,-3) -- (3,-4);
\filldraw (1.7,-4.5) circle (0pt) node[align=center, right] {(a)};

\filldraw (0,-5) circle (1pt) node[align=center, above] {$s$};
\filldraw (-1,-6) circle (1pt) node[align=center, left] {$\tilde{v}_1=4$};
\filldraw (1,-6) circle (1pt) node[align=center, right] {$\tilde{v}_2=2$};
\filldraw (-1,-7) circle (1pt) node[align=center, left] {$\tilde{v}_3=5$};
\draw (0,-5) -- (-1,-6) -- (-1,-7);
\draw (0,-5) -- (1,-6);
\filldraw (-.3,-7.5) circle (0pt) node[align=center, right] {(b)};

\filldraw (5,-5) circle (1pt) node[align=center, above] {$s$};
\filldraw (4,-6) circle (1pt) node[align=center, left] {$\tilde{v}_1=4$};
\filldraw (6,-6) circle (1pt) node[align=center, right] {$\tilde{v}_2=2$};
\filldraw (4,-7) circle (1pt) node[align=center, left] {$\tilde{v}_3=5$};
\filldraw (6,-7) circle (1pt) node[align=center, right] {$\tilde{v}_4=1$};
\draw (5,-5) -- (4,-6) -- (4,-7);
\draw (5,-5) -- (6,-6) -- (6,-7);
\filldraw (4.7,-7.5) circle (0pt) node[align=center, right] {(c)};
\end{tikzpicture}
\caption{Three example networks with 2, 3, and 4 buyers respectively, where buyers' virtual bids are listed beside them.}
\label{fig:2-4buyers}
\end{figure}

We can observe the key is that one can block others' participation to be the highest bidder. We characterize buyers who have such an ability as \emph{potential winners}.

\begin{definition}
Given a report profile $t'$, let $V_{-r_i'}(t')$ be the set of valid buyers if we set $r_i'\leftarrow\emptyset$, and $V_{-i}(t')=V_{-r_i'}(t')\setminus \{i\}$. Then define the set of \textbf{potential winners} (of Myerson's) as
$ W(t') = \{ i \mid i\in \arg\max_{j\in V_{-r_i'}(t')} \{ \tilde{v}_j \}$ with lexicographic tie-break and  $\tilde{v}_i \geq 0 \} $,
where $\tilde{v}_i$ is the virtual bid of $i$, and for all buyers $i\in W(t')$, define the \textbf{potential payment} (of Myerson's) 
as $p_i^*(t') = \phi_i^{-1}(\max\{0, \max_{j\in V_{-i}(t')} \tilde{v}_j \})$.
\end{definition}

Intuitively, a potential winner has the highest non-negative virtual bid among the buyers without her invitations.
If there are multiple potential winners, there exists a structural relationship among them which is defined as follows.

\begin{definition}[\cite{li2017mechanism}]
Given a report profile $t'$, and the corresponding reported structure profile $r'$,
for a valid buyer $i\in V(t')$, define the \textbf{critical buyers} of $i$ as 
$C(i) = \{j\mid j\in V(t')$ and all simple paths from $s$ to $i$ in $r'$ pass $j \}$.
\end{definition}

\begin{lemma}\label{lem:chain}
Given a report profile $t'$, if the size of the potential winner set $|W(t')| \geq 2$, then for all $i,j\in W(t')$, $i\neq j$, we have either $i\in C(j)$ or $j\in C(i)$.
\end{lemma}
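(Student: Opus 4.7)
The plan is to prove the contrapositive: if $i, j \in W(t')$ with $i \neq j$ and $i \notin C(j)$, $j \notin C(i)$, then we derive a contradiction with the lexicographic tie-breaking rule in the definition of potential winners.

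First I would unpack what ``$j \notin C(i)$'' means: by the definition of critical buyers, there exists a simple path from $s$ to $i$ in $r'$ that avoids $j$. Symmetrically, $i \notin C(j)$ gives a simple path from $s$ to $j$ avoiding $i$. The key technical step is to use these paths to show that $j \in V_{-r_i'}(t')$ and, symmetrically, $i \in V_{-r_j'}(t')$. For the first inclusion, I note that the path from $s$ to $j$ avoiding $i$ never traverses any edge outgoing from $i$ (since $i$ itself is not on the path), so this path remains intact after setting $r_i' \leftarrow \emptyset$; hence $j$ is still a valid buyer in the modified report profile.

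Once both reachability facts are established, I would apply the definition of $W(t')$ to each of $i$ and $j$. Since $i \in W(t')$ and $j \in V_{-r_i'}(t')$, the argmax condition with lexicographic tie-break yields that either $\tilde{v}_i > \tilde{v}_j$, or $\tilde{v}_i = \tilde{v}_j$ and $i$ beats $j$ in the fixed lexicographic order. Applying the symmetric statement for $j \in W(t')$ with $i \in V_{-r_j'}(t')$ yields the reverse: either $\tilde{v}_j > \tilde{v}_i$, or $\tilde{v}_i = \tilde{v}_j$ and $j$ beats $i$ in the lexicographic order. These two conclusions are mutually incompatible regardless of whether the virtual bids coincide, which gives the required contradiction.

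The only genuinely delicate part, and where I would be most careful, is the reachability argument showing $j \in V_{-r_i'}(t')$: one must verify that deleting $r_i'$ (the outgoing edges from $i$) does not inadvertently disconnect $j$ from $s$. The non-criticality hypothesis furnishes a simple $s$-to-$j$ path that avoids $i$ entirely, and I would emphasize that ``avoiding $i$'' is strictly stronger than ``avoiding $i$'s outgoing edges,'' so the path survives the modification. The rest is a direct application of the definitions, and no inductive or combinatorial machinery is needed.
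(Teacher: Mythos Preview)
Your proposal is correct and follows essentially the same argument as the paper: assume neither $i\in C(j)$ nor $j\in C(i)$, deduce $i\in V_{-r_j'}(t')$ and $j\in V_{-r_i'}(t')$, and obtain mutually incompatible orderings (with lexicographic tie-break) from the definition of $W(t')$. The only difference is that you spell out the reachability step carefully, whereas the paper asserts it in one line.
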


\begin{proof}
We prove it by contradiction. If neither $i\in C(j)$ nor $j\in C(i)$, then we have both $i\in V_{-r_j'}(t')$ and $j\in V_{-r_i'}(t')$. This implies that $\tilde{v}_i < \tilde{v}_j$ and $\tilde{v}_j < \tilde{v}_i$ (or $\tilde{v}_j = \tilde{v}_i$ but $i$ and $j$ both have larger lexicographic order than the other), which cannot happen at the same time. Therefore, either $i\in C(j)$ or $j\in C(i)$ is satisfied.
\end{proof}

\subsection{The Partial Winner of Myerson's}
When there is only one potential winner, we can easily decide who is the winner. One idea is to find a unique special potential winner. This suggests the idea of the \emph{$k$-partial potential winner} as follows.

\begin{definition}
Given a report profile $t'$ and a positive integer $k$, a buyer $i$ is a \textbf{$k$-partial potential winner} if there exists $r_i''\subseteq r_i'$ such that $|V(((v_i',r_i''), t_{-i}'))| = k$ and she is the winner of Myerson's mechanism among $V(((v_i',r_i''), t_{-i}'))$. Let the \textbf{$k$-partial potential payment} of the buyer be the corresponding payment in Myerson's mechanism.
\end{definition}

\begin{lemma}
At most one $k$-partial potential winner exists. 
\end{lemma}

\begin{proof}
We prove this statement by contradiction. Given a report profile $t'$ and an integer $k$, suppose there are two $k$-partial potential winners $i$ and $j$ such that $i\neq j$. Since we can decrease the number of their neighbors to make them the winner of Myerson's, then $\{i,j\}\subseteq W(t')$. According to Lemma~\ref{lem:chain}, we have either $i\in C(j)$ or $j\in C(i)$. W.l.o.g., suppose $i\in C(j)$. Then, when $i$ reports $r_i''$, $j$ must be excluded from the valid buyers; otherwise, $i$ cannot be the potential winner. Hence, for $j$, the number of valid buyers is no less than $k$ 
even if she leaves the auction, which means there is no $r_j''\subseteq r_j'$ such that the number of valid buyers becomes $k$. This is a contradiction to that $j$ is a $k$-partial potential winner.
\end{proof}

The above concept 
suggests the following mechanism.
\begin{framed}
\noindent \textbf{The $k$-Partial Winner of Myerson's ($k$-PWM)}

\noindent \rule{\textwidth}{0.5pt}

\noindent \textsc{Input:} a set of buyers $N$ and their type report profile $t'$.
\begin{enumerate}
\item Let $m = |V(t')|$. If $m<k$, then let $\pi_i = 0$, $p_i = 0$ for all $i\in N$ and goto \textsc{Output}.
\item If $m=k$, run Myerson's Mechanism among $V(t')$. 
\item If $m>k$, if there exists a $k$-partial potential winner, then let her be the winner, and her payment is her \textbf{minimal} $k$-partial potential payment; otherwise, let $\pi_i = 0$, $p_i = 0$ for all $i\in N\setminus\{s\}$.
\end{enumerate}
\noindent \textsc{Output:} the allocation $\pi$ and the payment $p$.
\end{framed}

Intuitively, $k$-PWM sacrifices the revenue when the number of buyers reached by the seller is less than $k$, and starts the auction when there are at least $k$ buyers. Taking the structures in Figure~\ref{fig:2-4buyers} as examples, a $3$-PWM will run as follows: (i) For the instance of (a), since there are only two buyers, the $3$-PWM does not allocate the item. (ii) For the instance of (b), since there are exactly three buyers, the $3$-PWM runs Myerson's mechanism among them and finally allocates the item to buyer 3. (iii) For the instance of (c), since there are four buyers, and buyer 1 is the only $3$-partial potential winner (she can block buyer 3's participation and then the number of remaining buyers is three and she has the highest virtual bid), then the $3$-PWM allocates the item to buyer 1. 

\begin{theorem}
$k$-PWM is IR and IC. 
\end{theorem}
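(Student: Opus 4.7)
My plan is to derive IR directly from Myerson's IR property in the relevant subauction, and to establish IC by case-analyzing whether buyer $i$ wins under truthful reporting. For IR, a non-winner pays $0$. A winner's payment always coincides with a Myerson payment in some subauction where her virtual bid $\phi_i(v_i)$ is the highest non-negative bid (either the full valid-buyer auction in step~2 when $m=k$, or the $k$-reduction achieving the minimal $k$-partial potential payment in step~3 when $m>k$). Lemma~\ref{lem:mm} then gives that this payment is at most $v_i$, yielding non-negative utility.

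For IC, the enabling structural fact is that since every deviation satisfies $r_i'\subseteq r_i$, any $k$-reduction $r_i''\subseteq r_i'$ available under a deviation is also available under truthful reporting. Suppose first that $i$ wins truthfully with payment $p_i^*$, the minimum Myerson payment over her winning $k$-reductions; note that each such Myerson payment $\phi_i^{-1}(\max\{0,\max_{j\neq i}\tilde{v}_j\})$ is independent of $i$'s own bid. Under any deviation in which $i$ still wins, her payment is a minimum over $k$-reductions $r_i''\subseteq r_i'$ on which she wins Myerson with her reported $v_i'$. Reductions that also appeared truthfully contribute values no smaller than $p_i^*$, while any new reduction that appears only under an inflated $v_i'$ has payment strictly exceeding $v_i\geq p_i^*$. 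Either way the deviation payment is at least $p_i^*$. If instead the deviation makes $i$ lose, her utility drops to $0\leq v_i-p_i^*$.

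If $i$ does not win truthfully, becoming a winner with non-negative utility would require her true virtual bid $\phi_i(v_i)$ to be the highest in some $k$-valid subauction obtainable from some $r_i''\subseteq r_i$, but that same reduction would already make her a $k$-partial potential winner under truth, contradicting the assumption. Any remaining route to winning forces $v_i'>v_i$ together with a Myerson payment exceeding $v_i$, yielding negative utility. The principal obstacle I anticipate is the careful bookkeeping across the three regimes $m<k$, $m=k$, $m>k$ as a deviation crosses between them; the trickiest sub-case is a truthful winner in the $m>k$ regime who hides neighbors and lands in the $m=k$ regime, where one must verify that the resulting Myerson outcome corresponds to one of her pre-existing winning $k$-reductions so that its payment is already part of the minimum $p_i^*$. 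Once this correspondence is checked, the remaining sub-cases collapse into the monotonicity arguments above.
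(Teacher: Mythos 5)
Your proof is correct, and it rests on the same two pillars as the paper's: Myerson's IR/IC properties (Lemma~1) and the fact that paying the winner her \emph{minimal} $k$-partial potential payment is exactly what neutralizes diffusion misreports. But your decomposition is genuinely different. The paper argues sequentially: first, for any fixed diffusion report $r_i'$, truthful bidding is a best valuation report (a Myerson-style trichotomy: nothing changes, lose with zero utility, or win with negative utility); second, with a truthful bid, it case-analyzes a diffusion misreport according to whether the resulting number of valid buyers $m'$ satisfies $m'>k$, $m'=k$, or $m'<k$. You instead treat joint deviations $(v_i',r_i')$ in one pass, splitting on whether $i$ wins under truth, and you make explicit the monotonicity fact the paper only gestures at: since $r_i'\subseteq r_i$, every $k$-reduction available after a deviation was already available under truth, shared reductions have bid-independent payments, and reductions that become winning only through an inflated bid cost more than $v_i$; hence the minimum defining the payment can only weakly increase. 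This buys extra rigor in two spots where the paper is loose: its case $m'>k$ asserts that the winner's allocation and payment ``will not change,'' when in fact the deviator's own payment can strictly increase (you claim only the inequality that is needed), and its handling of a truthful loser is implicit, whereas you derive a contradiction from the uniqueness of the $k$-partial potential winner. Also note that the sub-case you flag as still requiring verification (a truthful winner in the $m>k$ regime deviating into $m'=k$) is already closed by your own structural fact: the resulting valid-buyer set is exactly the reduction $r_i''=r_i'$, available under truth, so its Myerson payment is either a term in the minimum defining $p_i^*$ or exceeds $v_i$. The price of your route is heavier bookkeeping across the three regimes; the paper's sequential decomposition is shorter but leans on assertions stated more strongly than necessary.
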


\begin{proof}
    (1) For IR, 
    in $k$-PWM, a buyer's payment is 0 or the same as Myerson's mechanism, which always ensures non-negative utility when the buyer reports a true valuation.
    
    (2) For IC, first, if $i$ misreports her valuation for any report of $r_i'$, she may (i) change nothing, (ii) lose the item with 0 utility, or (iii) win the item with negative utility. It means that for any buyer $i$ and any possible neighbor set $r_i'$ she may report, reporting her truth valuation $v_i$ will maximize her utility. Hence, we only need to check whether a buyer $i$ wants to invite fewer other buyers. Suppose buyer $i$ reports $v_i$ and $r_i' \subset r_i$, then
        (i) if the number of valid buyers does not change, buyer $i$'s utility will not change, either;
        (ii) if the number of valid buyers decreases, let the number of valid buyers be $m'$ when reporting $r_i'$ and be $m$ when reporting $r_i$. 
        \begin{enumerate}
            \item[(a)] If $m'<k$, then the utility of buyer $i$ will be 0.
            \item[(b)] If $m' \geq k$, there are two possibilities. If $i$ was not the $k$-partial potential winner, she still cannot be the winner and gets 0 utility. If $i$ was the $k$-partial potential winner, since we have already given her the minimal $k$-partial potential payment, she cannot have a lower payment. 
        \end{enumerate}
    Therefore, we conclude that the $k$-PWM mechanism is IC.
\end{proof}

\begin{theorem}\label{thm:optimal}
$k$-PWM is optimal over $R_k$.
\end{theorem}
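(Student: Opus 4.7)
The plan is to combine the upper bound from Proposition~\ref{prop:upperbound} with a direct computation of $k$-PWM's expected revenue at any $r \in R_k$, so that the two values coincide and local optimality follows immediately.

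First, I would fix an arbitrary structure profile $r \in R_k$, i.e.\ a profile in which the number of valid buyers is exactly $k$. Under truthful reporting we have $m = |V(t')| = k$, so by the definition of $k$-PWM, step~2 of the mechanism triggers and the auction is resolved by running Myerson's mechanism on the set $V(t')$ of valid buyers. Hence the expected revenue of $k$-PWM at $r$, where the valuations are drawn independently from $\{F_i\}_{i \in V(t')}$, equals the expected revenue of Myerson's mechanism applied to these $k$ buyers.

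Next, I would invoke Proposition~\ref{prop:upperbound}: for any IR and IC diffusion mechanism at a structure profile with $k$ valid buyers, the seller's expected revenue cannot exceed the expected revenue of Myerson's mechanism applied to the same $k$ buyers (viewed as direct neighbors of $s$). Since by the previous step $k$-PWM at $r \in R_k$ attains precisely this upper bound, no IR and IC diffusion mechanism can do strictly better at $r$. Combined with the earlier theorem that $k$-PWM is IR and IC, this establishes local optimality of $k$-PWM at every $r \in R_k$.

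There is no serious obstacle here; the only subtlety worth spelling out is that the expected revenue computation uses the truthful-reporting profile (so that the $m = k$ branch is executed), which is justified precisely by the IC property already proved. Thus the argument reduces to matching a lower bound (the performance of $k$-PWM via step~2) with the upper bound of Proposition~\ref{prop:upperbound}, and local optimality at $r \in R_k$ follows.
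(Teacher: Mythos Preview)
Your proposal is correct and follows essentially the same approach as the paper: observe that at any $r\in R_k$ the mechanism executes step~2 (Myerson on the $k$ valid buyers), and then invoke Proposition~\ref{prop:upperbound} to conclude this matches the upper bound, hence is locally optimal. The only addition in your write-up is the explicit remark that truthful reporting (and hence $m=k$) is justified by the previously proved IC property, which the paper leaves implicit.
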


\begin{proof}
When there are exactly $k$ valid buyers, $k$-PWM runs Myerson's mechanism among all of them. By Proposition~\ref{prop:upperbound}, any other IC and IR mechanism that cannot have a higher expected revenue when there are totally at most $k$ valid buyers. As a result, $k$-PWM is optimal over $R_k$.
\end{proof}

\subsection{Impossibility of Universal Optimality}
Now with the class of $k$-PWMs, we have optimal diffusion auctions for specific classes. One may further ask whether there is a mechanism that can have the highest expected revenue than others on whichever structure. Unfortunately, we can notice that the optimal expected revenue that can be achieved in any structure is the same as Myerson's mechanism with the same number of buyers without diffusion. This gives us a negative answer to the question.

\begin{theorem}\label{thm:global}
An optimal diffusion auction mechanism over $R$ does not exist.
\end{theorem}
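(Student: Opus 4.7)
The plan is a proof by contradiction. Suppose there is a globally optimal IC and IR diffusion auction $M^*$. By combining Proposition~\ref{prop:upperbound} with Theorem~\ref{thm:optimal}, $M^*$ must attain, on every structure profile $r \in R_m$ and for every $m \geq 1$, exactly the expected revenue that Myerson's mechanism earns when run on $m$ bidders drawn independently from the priors. I will show this requirement is incompatible with the IC constraint by exhibiting two related structures on which the forced allocations clash.

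Concretely, I would take the chain $r \in R_2$ with $r_s = \{1\}$ and $r_1 = \{2\}$, and the structure $r'$ obtained by buyer $1$ reporting $r_1' = \emptyset$, which lies in $R_1$ (only buyer $1$ is valid). Since $M^*$ is locally optimal at $r'$, its expected revenue there equals the Myerson-with-one-buyer revenue; since $M^*$ is locally optimal at $r$, its expected revenue there equals the Myerson-with-two-buyer revenue. I then pick a joint valuation $(v_1, v_2)$ in the support of the prior such that (i) $\phi_1(v_1) \geq 0$, so buyer $1$ would win under Myerson with a single bidder at payment $\phi_1^{-1}(0)$, but (ii) $\phi_2(v_2) > \phi_1(v_1)$, so in Myerson's two-bidder auction buyer $2$ wins and buyer $1$ has utility $0$. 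This yields the contradiction: when $M^*$ implements Myerson's allocation on both structures, buyer $1$ strictly prefers to misreport $r_1' = \emptyset$ rather than her truthful $r_1 = \{2\}$, violating IC.

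The main obstacle is the step ``equal expected revenue implies Myerson's allocation at individual profiles,'' because local optimality only pins down an integral whereas the IC violation I want to exhibit is at a single type profile. To close this gap I would invoke the Myerson characterization: under monotone non-decreasing hazard rates, Myerson's allocation is the almost-everywhere unique allocation rule compatible with IC, IR, and maximal expected revenue in a single-item auction. The conditional mechanism that $M^*$ induces on structure $r'$ (respectively $r$) is itself IC and IR as a traditional single-item auction over the valid buyers, so this uniqueness pins down $M^*$'s allocation almost everywhere on both structures. Because the set of profiles $(v_1, v_2)$ satisfying both (i) and (ii) has positive measure (the reserve $\phi^{-1}(0)$ lies strictly inside $[\underline{v},\overline{v}]$ and $\phi$ is strictly increasing), there is a positive-measure family of profiles at which buyer $1$'s truthful utility is $0$ while her deviation utility is strictly positive, yielding an IC violation somewhere in the type space. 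This uniqueness-of-Myerson argument is the delicate step I would spend most of the write-up justifying; once it is in place, the deviation analysis is immediate.
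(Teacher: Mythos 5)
Your proposal is correct and follows essentially the same route as the paper's proof: Theorem~\ref{thm:optimal} shows the upper bound of Proposition~\ref{prop:upperbound} is tight at every structure profile, so a globally optimal mechanism would have to coincide with Myerson's mechanism among all valid buyers, and the neighbor-hiding deviation then contradicts IC. If anything, your write-up is more careful than the paper's three-line argument, which silently assumes exactly the step you flag as delicate --- that matching Myerson's expected revenue forces the mechanism to actually implement Myerson's allocation --- whereas you justify it via almost-everywhere uniqueness of the revenue-optimal auction under the monotone hazard rate assumption and exhibit the IC violation concretely on a two-buyer chain.
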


\begin{proof}
According to Theorem~\ref{thm:optimal} and Proposition~\ref{prop:upperbound}, if an IC and IR diffusion auction mechanism $M$ is optimal over $R$, $M$ should have the same expected revenue as the $n(r)$-PWMs for any structure profile $r\in R$, where $n(r)$ is the number of buyers in $r$. Since $n(r)$-PWM runs Myerson's mechanism among all buyers in $r$, then $M$ has to always run Myerson's mechanism among all valid buyers (according to the uniqueness of Myerson's solution~\cite{myerson1981optimal,archer2001truthful}), i.e., $M$ directly applies Myerson's mechanism as a diffusion auction mechanism, which is not IC. Hence, such a diffusion mechanism $M$ does not exist.
\end{proof}

\section{Optimal Approximation Mechanisms}\label{sec:general}
Although $k$-PWMs gives the optimal revenue we can achieve in different structures, \textbf{in practice, it is hard to accurately predict how many buyers will attend the auction. It is also not robust as the seller may lose all the revenue when there is only one buyer absent.} Hence, in this section, we consider the approximation ratio to optimal revenue over $R$. 

\begin{definition}
An IC and IR diffusion auction mechanism $M$ is $\alpha$-optimal if \[ \inf_{r\in R} \frac{\mathbb{E}_{\{v_i\}\sim\{F_i\}}[rev^r_M]}{\mathbb{E}_{\{v_i\}\sim\{F_i\}}[rev^r_{n(r)\text{-}\mathtt{PWM}}]} \geq \alpha \] where $n(r)$ is the number of valid buyers in $r$.
\end{definition}

We can see that a $k$-PWM itself is not ideal, since its approximation ratio is 0. For existing mechanisms in previous work, IDM~\cite{li2017mechanism} and its extended class CDM~\cite{li2022diffusion} all fail to have a non-zero approximation ratio since they all have 0 revenue if the structure is a single chain. Therefore, we propose a new mechanism that has a bounded positive approximation ratio.

\subsection{The Closest Winner of Myerson's}\label{sec:mechanism}
According to Lemma~\ref{lem:chain}, the potential winner set forms a sequence $(w_1, w_2, \dots, w_m)$, where $w_k\in C(w_{k+1})$ for all $1\leq k<m$. Buyer $w_1$ in the sequence is the closest potential winner to the seller and should be given precedence over all other potential winners because they cannot be informed of the auction without buyer $w_1$'s diffusion. Based on this observation, we propose the following mechanism.

\begin{framed}
\noindent \textbf{The Closest Winner of Myerson's (CWM)}

\noindent \rule{\textwidth}{0.5pt}

\noindent \textsc{Input:} a set of buyers $N$ and their report profile $t'$.
\begin{enumerate}
\item Find the potential winner set $W(t')$, represented by $\{ w_1,$ $\dots, w_m \}$, where $w_k\in C(w_{k+1})$ for all $1\leq k<m$. If $W(t') = \emptyset$, then set $\pi_i(t') = 0$, $p_i(t') = 0$ for all $i\in N$ and goto \textsc{Output}.
\item Set $\pi_{w_1}(t') = 1$ and $\pi_{i\neq w_1}(t') = 0$.
\item Set $p_{w_1}(t') = p^*_{w_1}(t')$ and $p_{i\neq w_1}(t') = 0$.
\end{enumerate}
\noindent \textsc{Output:} the allocation $\pi$ and the payment $p$.
\end{framed}

Since we only need to find the closest potential winner to the seller, it is not necessary to compute the whole potential winner set and suggests a simple linear-time algorithm.

\begin{framed}
\noindent \textbf{A Simple Algorithm for CWM}

\noindent \rule{\textwidth}{0.5pt}

\noindent \textsc{Input:} a set of buyers $N$ and their report profile $t'$.
\begin{enumerate}
\item Set $B\leftarrow r_s$, and $\tilde{p}_i = 0$ for all $i$.
\item Loop
\begin{enumerate}
\item Run Myerson's mechanism among buyers in the set $B$ and let $(\hat{\pi}, \hat{p})$ be the output.
\item Let $i$ be the winner in $\hat{\pi}$ i.e. $\hat{\pi}_i = 1$ ($i$ may not exist). Set $\tilde{p}_i \leftarrow \max\{ \tilde{p}_i, \hat{p}_i \}$.
\item Set $B\leftarrow B\cup \left( \bigcup_{j\in B, j\neq i} r_j' \right)$. 
\end{enumerate}
Until the set $B$ no longer changes.
\item Suppose the output of Myerson's mechanism in the last iteration of the loop is $(\hat{\pi}, \hat{p})$. Let $j$ be the winner in $\hat{\pi}$ i.e. $\hat{\pi}_j = 1$ and then set $\pi_j = 1$ and $\pi_{i\neq j} = 0$; if $j$ does not exist, set $\pi_i = 0$ for all $i$.
For all $i$, set $p_i = \tilde{p}_i \cdot \pi_i$.
\end{enumerate}
\noindent \textsc{Output:} the allocation $\pi$ and the payment $p$.
\end{framed}

Intuitively, buyer $j$ must be the closest buyer in the potential winner set $W(t')$ when the algorithm terminates, because only buyers in $V_{-r_j'}(t')$ are visited in the algorithm and no other buyer in $V_{-r_j'}(t')$ has a chance to be a potential winner. Since the algorithm only runs a breadth-first traversal, 
CWM can be implemented with linear time. To better present the procedure, we show an example below.

\begin{figure}[!htbp]
\centering
\begin{tikzpicture}[scale=.75]

\filldraw (0,-1) circle (1pt) node[align=center, above] {$s$};
\filldraw (-1,-2) circle (1pt) node[align=center, left] {$\tilde{v}_1=3$};
\filldraw (1,-2) circle (1pt) node[align=center, right] {$\tilde{v}_2=2$};
\filldraw (-2,-3) circle (1pt) node[align=center, left] {$\tilde{v}_3=4$};
\filldraw (0,-3) circle (1pt) node[align=center, left] {$\tilde{v}_4=7$};
\filldraw (2,-3) circle (1pt) node[align=center, left] {$\tilde{v}_5=1$};
\filldraw (-0.8,-4) circle (1pt) node[align=center, left] {$\tilde{v}_6=9$};
\filldraw (0.8,-4) circle (1pt) node[align=center, left] {$\tilde{v}_7=5$};
\filldraw (1.2,-4) circle (1pt) node[align=center, right] {$\tilde{v}_8=6$};
\draw (0,-1) -- (-1,-2) -- (-2,-3);
\draw (0,-1) -- (1,-2) -- (0,-3) -- (-0.8,-4);
\draw (-1,-2) -- (0,-3);
\draw (1,-2) -- (2,-3) -- (1.2,-4);
\draw (0,-3) -- (0.8,-4);

\filldraw (5,-1) circle (1pt) node[align=center, above] {$s$};
\filldraw[blue] (4,-2) circle (1pt) node[align=center, left] {$1$};
\filldraw (6,-2) circle (1pt) node[align=center, right] {$2$};
\draw (3,-3) circle (1pt) node[align=center, left] {$3$};
\draw (5,-3) circle (1pt) node[align=center, right] {$4$};
\draw (7,-3) circle (1pt) node[align=center, above] {$5$};
\draw (4.2,-4) circle (1pt) node[align=center, left] {$6$};
\draw (5.8,-4) circle (1pt) node[align=center, left] {$7$};
\draw (6.2,-4) circle (1pt) node[align=center, right] {$8$};

\draw (5,-1) -- (4,-2);
\draw[dashed] (4,-2) -- (3,-3);
\draw (5,-1) -- (6,-2);
\draw[dashed] (6,-2) -- (5,-3);
\draw[dashed] (5,-3) -- (4.2,-4);
\draw[dashed] (6,-2) -- (7,-3);
\draw[dashed] (4,-2) -- (5,-3);
\draw[dashed] (7,-3) -- (6.2,-4);
\draw[dashed] (5,-3) -- (5.8,-4);
\filldraw (3,-1.2) circle (0pt) node[align=center, right] {(a)};

\filldraw (0,-5) circle (1pt) node[align=center, above] {$s$};
\filldraw (-1,-6) circle (1pt) node[align=center, above] {$1$};
\filldraw (1,-6) circle (1pt) node[align=center, above] {$2$};
\draw (-2,-7) circle (1pt) node[align=center, left] {$3$};
\filldraw[blue] (0,-7) circle (1pt) node[align=center, right] {$4$};
\filldraw (2,-7) circle (1pt) node[align=center, above] {$5$};
\draw (-.8,-8) circle (1pt) node[align=center, left] {$6$};
\draw (.8,-8) circle (1pt) node[align=center, left] {$7$};
\draw (1.2,-8) circle (1pt) node[align=center, right] {$8$};
\draw (0,-5) -- (-1,-6);
\draw[dashed] (-1,-6) -- (-2,-7);
\draw (0,-5) -- (1,-6) -- (0,-7);
\draw[dashed] (0,-7) -- (-.8,-8);
\draw[dashed] (-1,-6) -- (0,-7);
\draw (1,-6) -- (2,-7);
\draw[dashed] (2,-7) -- (1.2,-8);
\draw[dashed] (0,-7) -- (.8,-8);
\filldraw (-2,-5.2) circle (0pt) node[align=center, right] {(b)};

\filldraw (5,-5) circle (1pt) node[align=center, above] {$s$};
\filldraw (4,-6) circle (1pt) node[align=center, above] {$1$};
\filldraw (6,-6) circle (1pt) node[align=center, above] {$2$};
\filldraw (3,-7) circle (1pt) node[align=center, left] {$3$};
\filldraw[blue] (5,-7) circle (1pt) node[align=center, right] {$4$};
\filldraw (7,-7) circle (1pt) node[align=center, above] {$5$};
\draw (4.2,-8) circle (1pt) node[align=center, left] {$6$};
\draw (5.8,-8) circle (1pt) node[align=center, left] {$7$};
\filldraw (6.2,-8) circle (1pt) node[align=center, right] {$8$};
\draw (5,-5) -- (4,-6) -- (3,-7);
\draw (5,-5) -- (6,-6) -- (5,-7);
\draw[dashed] (5,-7) -- (4.2,-8);
\draw (4,-6) -- (5,-7);
\draw (6,-6) -- (7,-7) -- (6.2,-8);
\draw[dashed] (5,-7) -- (5.8,-8);
\filldraw (3,-5.2) circle (0pt) node[align=center, right] {(c)};
\end{tikzpicture}
\caption{The above-left graph is an example of the network in a diffusion auction. The following three graphs (a), (b), and (c) show the algorithm to find the closest potential winner. In each step, the blue node represents who has the highest virtual bid among the buyers connected with solid lines.}
\label{fig:example}
\end{figure}

\begin{example}\label{eg:stepwise}
Consider the network shown in Figure~\ref{fig:example} with $n=8$ buyers. The algorithm for CWM runs as follows. First, the auction runs among buyers in $r_s = \{1,2\}$. Since $\tilde{v}_1>\tilde{v}_2$, then buyer 2 has no chance to win and we can introduce buyers in $r_2=\{4,5\}$ to the auction. Repeat the procedure until no more buyers can participate in the auction and we can determine the final winner 4. 
\end{example}


\begin{theorem}\label{thm:icir}
CWM is IR and IC.
\end{theorem}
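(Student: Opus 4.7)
The plan is to prove IR directly from the potential-winner definition, and to prove IC by reducing CWM, from each agent's perspective, to a one-dimensional threshold auction in $v_i'$ after showing that $i$'s own report of $r_i'$ is payoff-irrelevant to $i$.

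For IR, observe that $w_1 \in W(t')$ forces $\tilde{v}_{w_1} \geq \tilde{v}_j$ for every $j \in V_{-r_{w_1}'}(t')$ together with $\tilde{v}_{w_1} \geq 0$. Since $V_{-w_1}(t') \subseteq V_{-r_{w_1}'}(t')$, we have $\max\{0, \max_{j \in V_{-w_1}(t')} \tilde{v}_j\} \leq \tilde{v}_{w_1}$, and applying $\phi_{w_1}^{-1}$ (monotone under the hazard-rate assumption) yields $p^*_{w_1}(t') \leq v_{w_1}$. Non-winners pay $0$, so every participant has non-negative utility.

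For IC I would rely on three invariances and then invoke Myerson. \textbf{First,} $V_{-r_i'}(t')$ is by definition the set of valid buyers after overwriting $r_i'$ with $\emptyset$, hence depends only on $t_{-i}'$; consequently $V_{-i}(t')$, the threshold $p^*_i(t')$, and the condition $\tilde{v}_i \geq \max\{0,\max_{j \in V_{-i}(t')}\tilde{v}_j\}$ that governs $i \in W(t')$ all depend on $i$'s report only through $v_i'$. \textbf{Second,} by Lemma~\ref{lem:chain} any potential winner $w$ strictly closer to $s$ than $i$ must satisfy $w \in C(i)$, so $i \notin V_{-r_w'}(t')$ and $v_i'$ does not enter $w$'s potential-winner condition. \textbf{Third,} and this is the step I expect to be the main obstacle, since such a $w$ lies on every $s$-to-$i$ path, any buyer reached from $s$ only via $i$ is also reached only via $w$; hence cutting $r_i'$ can neither add nor remove buyers in $V_{-r_w'}(t')$. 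Together these pin down the predicate ``some potential winner is strictly closer to $s$ than $i$'' as a function of $t_{-i}'$ alone.

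Combining, for fixed $t_{-i}'$ agent $i$'s allocation and payment depend on $(v_i',r_i')$ solely through the one-dimensional rule ``$i$ wins iff the blocking predicate fails and $v_i' \geq p^*_i$,'' paying exactly $p^*_i$ upon winning; both the threshold and the predicate are invariant under $i$'s report. Myerson's single-parameter characterization then delivers IC in $v_i'$ (monotone allocation plus threshold payment), and the $r_i'$-invariance delivers IC in neighbor reports. The main technical hurdle, as flagged above, is verifying the third invariance: showing that $i$'s partial invitations cannot perturb the reachability structure that defines $V_{-r_w'}(t')$ for any closer potential winner $w$, which is precisely where the chain lemma is essential.
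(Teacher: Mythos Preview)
Your proposal is correct and follows essentially the same two-step strategy as the paper: first argue that $i$'s diffusion report $r_i'$ is payoff-irrelevant to $i$, then reduce the $v_i'$-dimension to a threshold rule with threshold $p_i^*$ and invoke the standard monotonicity argument. Your three invariances make precise exactly what the paper handles informally (``she loses the item before considering her diffusion''); in particular, your third invariance---that $V_{-r_w'}(t')$ is unaffected by $r_i'$ whenever $w\in C(i)$---is the point the paper glosses over, and your use of Lemma~\ref{lem:chain} to localize the blocking predicate to buyers in $C(i)$ is the right way to close it.
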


\begin{proof}
(i) For IR, if a buyer is not the winner, she has 0 payment. If buyer $i$ is the winner, then her payment is equal to that in Myerson's mechanism running on $V_{-r_i'}(t')$, which ensures non-negative utility when the buyer truthfully reports.

(ii) To see that the mechanism is IC, we first show that for any buyer $i$ and any possible bid $v_i'$ she may report, her diffusion $r_i'$ will not affect her final utility. Then we show that for any buyer $i$ and any possible diffusion $r_i'$, reporting her true valuation $v_i$ will always maximize her final utility. Finally, we can combine them to conclude that truthfully reporting her type $t_i=(v_i,r_i)$ is a dominant strategy.

Consider the case of a buyer $i$ who makes a bid of $v_i'$. If buyer $i$ does not win with her diffusion $r_i'$, she loses the item before considering her diffusion. On the other hand, if the buyer $i$ is the winner, she must have won the competition with all other buyers that can be informed without her diffusion. Hence, 
her diffusion will not affect her final utility.

Consider the case of a buyer $i$ with diffusion $r_i'$. If she is the winner with the bid $v_i' = v_i$, then she will still be the winner with a bid $v_i'' > v_i$, and her payment will not change since the payment is only determined by other buyers' bids. If she changes bid $v_i'$ to $v_i'' < v_i$, the payment will not change if $i$ still wins, but she may lose the item since her virtual bid will decrease, which makes $i$'s utility be 0. On the other hand, if $i$ is not the winner with a bid $v_i' = v_i$, then she will still have no chance to be the winner with a bid $v_i'' < v_i$, and her utility remains 0. If she changes bid $v_i'$ to $v_i'' > v_i$, the utility will remain 0 if $i$ still loses, but she may win the item since her virtual bid will increase. In this case, $i$'s utility becomes to $v_i - \phi_i^{-1}(\max\{\tilde{\nu}, 0\})$, where $\tilde{\nu} = \max_{j\in V_{-i}(t')} \tilde{v}_j$. Because $i$ is not the winner when reporting $v_i' = v_i$, then $\phi_i(v_i) \leq \max\{\tilde{\nu}, 0\}$, i.e., $v_i - \phi_i^{-1}(\max\{\tilde{\nu}, 0\})\leq 0$. As a result, reporting $v_i$ will always maximize the utility of any buyer $i$ with all possible diffusion $r_i'$.

Combining the above, 
we can conclude that CWM is IC.
\end{proof}

Then, we can observe two direct propositions. Firstly, we will see CWM is always better than local Myerson's mechanism, which is optimal if we do not utilize diffusion. Secondly, CWM is optimal over some special structures.

\begin{proposition}
CWM's revenue is always no worse than that of Myerson's mechanism among the seller's neighbors.
\end{proposition}

\begin{proof}
If there is no winner in CWM, then there is also no winner in local Myerson's mechanism. If the winner in CWM is one of the seller's neighbors, then she is also the winner in local Myerson's mechanism with fewer competitors. Hence, her payment will not decrease. If the winner in CWM is not one of the seller's neighbors, the lowest bid can make her the winner must be higher than all the seller's neighbors. Therefore, we can conclude the proposition.
\end{proof}

\begin{proposition}\label{prop:all}
CWM is optimal over the set of structure profiles that satisfies $C(i) = \emptyset$ for any buyer $i\in N$.
\end{proposition}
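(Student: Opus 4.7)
The plan is to show that under the hypothesis $C(i)=\emptyset$ for every buyer, CWM produces, pointwise in the valuation draw, exactly the same allocation and payment as Myerson's mechanism run on the whole set $V(t')$ of valid buyers. Since by Proposition~\ref{prop:upperbound} this Myerson revenue is an upper bound on any IC and IR diffusion mechanism, and since $n(r)$-PWM meets that bound by Theorem~\ref{thm:optimal}, the ratio in the definition of $1$-optimality is forced to equal one on the prescribed class of structure profiles.

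The technical heart of the argument is to prove that $V_{-r_i'}(t')=V(t')$ for every $i\in V(t')$ whenever $C(j)=\emptyset$ for every buyer $j$. The inclusion $V_{-r_i'}(t')\subseteq V(t')$ is automatic. For the reverse inclusion, buyer $i$ herself remains valid after erasing $r_i'$ because she is reached by an invitation chain $(i_1,\dots,i_k,i)$ whose edges all belong to \emph{other} buyers' reports and are therefore unaffected by wiping $r_i'$; and for any $j\neq i$ in $V(t')$, the hypothesis $i\notin C(j)$ provides a simple $s$-to-$j$ path that avoids $i$, so $j$ continues to be reached when $r_i'$ is emptied. With this equality in hand, the definition of the potential winner set reduces, under lexicographic tie-break, to the singleton consisting of Myerson's winner on $V(t')$, and $V_{-w_1}(t')=V(t')\setminus\{w_1\}$ makes the CWM payment $p_{w_1}^*$ identical to Myerson's payment for that winner.

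Taking expectations over the independent valuation draws, the two revenues coincide for every fixed structure in the class, so the infimum of the ratio in the $1$-optimality definition is exactly $1$. The main obstacle I expect is the careful bookkeeping of $V_{-r_i'}$ versus $V_{-i}$ together with the correct reading of the assumption ``$C(i)=\emptyset$''; since $i$ itself trivially lies on every path from $s$ to $i$, the hypothesis must be understood as ruling out only proper cut-points between $s$ and $i$, which is precisely the way the notation is used in the proof of Lemma~\ref{lem:chain}. Once this is pinned down, the rest of the proof is a direct identification of CWM's output with that of Myerson's mechanism on $V(t')$.
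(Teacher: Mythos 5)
Your proposal is correct and follows essentially the same route as the paper's proof: the hypothesis $C(i)=\emptyset$ for all $i$ yields $V_{-r_i'}(t')=V(t')$ for every buyer, so CWM coincides pointwise with Myerson's mechanism on the valid buyers, and Proposition~\ref{prop:upperbound} (together with the local optimality of $n(r)$-PWM) then forces the ratio to be $1$. The only difference is that you spell out the details the paper leaves implicit (the reverse inclusion argument, the payment identification, and the reading of $C(i)=\emptyset$ as excluding $i$ itself), which is a faithful elaboration rather than a different approach.
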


\begin{proof}
Given a structure profile $r$, if for any buyer $i\in N$, $C(i)$ $=$ $\emptyset$, then $V_{-r_i}(t) = N$ for all $i$. Hence, CWM is equivalent to running Myerson's mechanism among $N$. According to Proposition~\ref{prop:upperbound}, the mechanism is optimal 
over $r$. 
\end{proof}

Note that when social networks are well-connected with a few cut-points, CWM can also achieve a good performance.
More importantly, 
we will see it has a bounded gap to optimal revenue for all structure profiles. For simplicity, we assume an identical distribution.

\begin{theorem}~\label{thm:cwm-bound}
If all buyers' valuations are drawn independently from an identical distribution $F$, then CWM is a $(\phi^{-1}(0)/\overline{v})$-optimal mechanism, and the approximation ratio is tight.
\end{theorem}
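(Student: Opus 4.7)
The plan is to control the numerator and denominator of the ratio pointwise on each valuation realization and then take expectations. By Theorem~\ref{thm:optimal}, the denominator $\mathbb{E}^{n(r)\text{-}\mathtt{PWM}}[rev_s(v,r)]$ coincides with the expected revenue of Myerson's mechanism run on all $n(r)$ buyers, so I may replace the denominator by Myerson's expected revenue with $n(r)$ participants. The strategy is then to bound Myerson's revenue from above by $\overline{v}$ per realization and CWM's revenue from below by the Myerson reserve $\phi^{-1}(0)$ per realization, and to argue that both mechanisms allocate the item with the \emph{same} probability.

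For the upper bound, any IR payment is at most the winner's valuation, and valuations lie in $[\underline{v},\overline{v}]$, so
\[ \mathbb{E}^{\text{Myerson}}[rev_s] \;\leq\; \overline{v}\cdot \Pr\!\left[\max_i \tilde{v}_i \geq 0\right]. \]
For the lower bound, whenever $W(t')\neq\emptyset$ the CWM winner $w_1$ pays
\[ p^*_{w_1}(t') \;=\; \phi^{-1}\!\left(\max\{0,\, \max_{j\in V_{-w_1}(t')} \tilde{v}_j\}\right) \;\geq\; \phi^{-1}(0), \]
using that $\phi^{-1}$ is monotone under the assumed monotone hazard rate. Hence $\mathbb{E}^{\text{CWM}}[rev_s] \geq \phi^{-1}(0)\cdot\Pr[W(t')\neq \emptyset]$.

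The crux is showing $\Pr[W(t')\neq\emptyset] = \Pr[\max_i \tilde v_i \geq 0]$. The ``only if'' direction is immediate since every potential winner has non-negative virtual bid. For ``if'', let $i^{*}$ be the globally maximum virtual bidder (with the same lex tie-break as in the definition of $W$). If $\tilde v_{i^{*}}\geq 0$, then $i^{*}\in V_{-r_{i^{*}}'}(t')$ trivially, and $\tilde v_{i^{*}} \geq \max_{j\in V_{-r_{i^{*}}'}(t')} \tilde v_j$ because the right-hand side is a maximum over a subset of all buyers; hence $i^{*}\in W(t')$. The two events thus coincide, the probability factors cancel, and we obtain
\[ \frac{\mathbb{E}^{\text{CWM}}[rev_s(v,r)]}{\mathbb{E}^{n(r)\text{-}\mathtt{PWM}}[rev_s(v,r)]} \;\geq\; \frac{\phi^{-1}(0)}{\overline{v}}. \]
Since the bound does not depend on $r$, it holds uniformly over $R$, giving $(\phi^{-1}(0)/\overline{v})$-optimality.

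The main obstacle is the equality of the two ``winner exists'' events: one must verify that even though CWM allocates to the \emph{closest} potential winner rather than the global virtual-bid maximizer, it still sells the item on exactly the same realizations as Myerson's. A minor side check is the well-definedness of $\phi^{-1}(0)$, which follows from $\phi$ being increasing (by the hazard-rate assumption) together with $\phi(\overline{v})=\overline{v}\geq 0$; if $\phi(\underline v)\geq 0$, the bound $p^*_{w_1}\geq \phi^{-1}(0)$ is interpreted with $\phi^{-1}(0)=\underline v$ and the argument goes through unchanged.
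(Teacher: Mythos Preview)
Your proof is correct and follows the same skeleton as the paper's: lower-bound $\mathbb{E}^{\mathtt{CWM}}[rev_s]$ by $\phi^{-1}(0)$ times the probability that the item is sold, upper-bound the Myerson benchmark by $\overline{v}$ times that same probability, and observe that CWM sells precisely when Myerson does (i.e., $W(t')\neq\emptyset \iff \max_i\tilde v_i\geq 0$). Your argument for this last equivalence via the global virtual-bid maximizer $i^*$ is exactly the right observation.

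Where you differ from the paper is in the upper bound on the benchmark. The paper writes out the expected Myerson revenue explicitly via order statistics,
\[
\mathbb{E}^{n\text{-}\mathtt{PWM}}[rev_s]=n(1-F(\tau))F^{n-1}(\tau)\tau+n(n-1)\int_\tau^{\overline v}F^{n-2}(x)(1-F(x))f(x)\,x\,\mathrm{d}x,
\]
bounds the integrand by replacing $x$ with $\overline v$, and after simplification obtains a denominator of the form $\overline v(1-F^n(\tau))$ minus a non-negative term; dropping that term yields $\tau/\overline v$. You instead bound the Myerson payment pointwise by $\overline v$ on each realization, which gives $\overline v\cdot\Pr[\max_i\tilde v_i\geq 0]=\overline v(1-F^n(\tau))$ in one line. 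Your route is strictly more elementary and, incidentally, makes no essential use of the i.i.d.\ assumption beyond having a single reserve $\phi^{-1}(0)$; the paper's explicit order-statistic computation leans on identical $F$ throughout. The trade-off is that the paper's intermediate expression is sharper for finite $n$ (the subtracted term shows the ratio strictly exceeds $\tau/\overline v$ there), whereas your bound is already the limiting value; but since the theorem only asserts the uniform lower bound $\tau/\overline v$, nothing is lost.
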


\begin{proof}
Let $\phi = \phi_1 = \cdots = \phi_n$ and $\tau = \phi^{-1}(0)$. For all structure profiles $r\in R_n$, the expected revenue of CWM is
\begin{align*}
    \mathbb{E}[rev^r_{\mathtt{CWM}}]
    & = \mathbb{E}[rev^r_{\mathtt{CWM}} \mid \max_{i\in N} \{ \tilde{v}_i \} \geq 0] \cdot \mathrm{Pr}(\max_{i\in N} \{ \tilde{v}_i \} \geq 0) \\
    & \geq \phi^{-1}(0) \cdot \mathrm{Pr}(\max_{i\in N} \{ v_i \} \geq \phi^{-1}(0)) \\
    & = \tau (1-F^n(\tau)).
\end{align*}

On the other hand, we have
\begin{align*}
    \mathbb{E}[rev^r_{n\text{-}\mathtt{PWM}}] & = n(1-F(\tau))F^{n-1}(\tau)\tau \\
    &\
    + n(n-1)\int_\tau^{\overline{v}} F^{n-2}(x)(1-F(x))f(x)x\mathrm{d}x
\end{align*}
where
\begin{align*}
    & n(n-1)\int_\tau^{\overline{v}} F^{n-2}(x)(1-F(x))f(x)x\mathrm{d}x \\ 
    \leq &\ n(n-1)\overline{v}\int_\tau^{\overline{v}} F^{n-2}(x)(1-F(x))f(x)\mathrm{d}x \\
    = &\ n(n-1)\overline{v}\int_\tau^{\overline{v}} (F^{n-2}(x)-F^{n-1}(x))\mathrm{d}F(x) \\
    = &\ n(n-1)\overline{v} \left( \frac{1}{n}F^n(\tau) - \frac{1}{n-1}F^{n-1}(\tau) + \frac{1}{n(n-1)} \right) \\
    = &\ -\overline{v}n(1-F(\tau))F^{n-1}(\tau) + \overline{v}(1-F^n(\tau))
\end{align*}
Therefore, the approximation ratio of CWM is at least
\begin{align*}
    \frac{\mathbb{E}[rev^r_{\mathtt{CWM}}]}{\mathbb{E}[rev^r_{n\text{-}\mathtt{PWM}}]}
    & \geq \frac{\tau(1-F^n(\tau))}{n(1-F(\tau))F^{n-1}(\tau)(\tau - \overline{v}) + \overline{v}(1-F^n(\tau))} \\
    & = \frac{\tau}{\frac{-n}{(1/F(\tau))^{n-1} - F(\tau)}(1-F(\tau))(\overline{v} - \tau) + \overline{v}} \\
    & \geq \frac{\tau}{\overline{v}} \quad\quad\quad\quad\quad\quad\quad\quad\quad\quad\quad\quad\quad\quad\quad (n\rightarrow \infty)
\end{align*}

Consider a structure profile $r\in R_n$ in which $r_s = \{1\}$, $r_i = \{i + 1\}$ for all $i<n$, and $r_n = \emptyset$. In this case, 
CWM asks buyers $1$, $2$, $\dots$, $n$ one by one whether their valuations are greater than $\phi^{-1}(0)$. The first buyer who has the answer `yes' will win the item and pay $\phi^{-1}(0)$. If all $n$ buyers' bids are less than $\phi^{-1}(0)$, no one will win the item. Hence, the expected revenue when $n$ approaches infinity is exactly $\phi^{-1}(0)$. 
As a result, the approximation ratio is tight. 
\end{proof}

Intuitively, the approximation ratio of CWM is determined by the reserve price. For example, if $F$ is the uniform distribution over $[0,\overline{v}]$, then 
CWM is $1/2$-optimal; if $F$ is the Beta distribution $\beta_{2,5}$ over $[0,1]$, then CWM is about $0.236$-optimal. This may look worse for some extreme distributions (e.g., it approaches 0 for $\beta_{2,b}$ when $b$ approaches infinity).

\subsection{Shifted Reserve Prices}
As CWM always chooses the first potential winner to be the winner, 
buyers who are away from the seller have fewer opportunities to win the item. We can see it in the following example. 
\begin{figure}[htbp]
\centering
\begin{tikzpicture}[scale=.8]
\filldraw (0,0) circle (1pt) node[align=center, above] {$s$};
\filldraw (-1,-.8) circle (1pt) node[align=center, left] {$1$};
\filldraw (1,-.8) circle (1pt) node[align=center, right] {$2$};
\filldraw (-2,-2) circle (1pt) node[align=center, left] {$3$};
\filldraw (-1.5,-2) circle (1pt) node[align=center, right] {$4$};
\filldraw (.5,-2) circle (1pt) node[align=center, right] {$k+2$};
\node (D) at (-.5,-2) {$\cdots$};
\draw (0,0) -- (-1,-.8) -- (-2,-2);
\draw (0,0) -- (1,-.8);
\draw (-1,-.8) -- (-1.5,-2);
\draw (-1,-.8) -- (D);
\draw (-1,-.8) -- (.5,-2);
\end{tikzpicture}
\caption{A structure profile where the majority of buyers in $r_1\setminus r_s$.}
\label{fig:extreme2layers}
\end{figure}
\begin{example}\label{eg:extreme}
Consider the structure profile illustrated in Figure~\ref{fig:extreme2layers}, where there are $k+2$ buyers. Assume that valuations are independently drawn from a uniform distribution $\mathrm{U}[0, 1]$ for all buyers. Then the expected revenue of CWM is
\[ \mathbb{E}[rev^r_{\mathtt{CWM}}] = \frac{5}{24} + 2^{-3-k}\left( k+1 + \frac{\left( -3+2^{2+k} -k \right)k^2}{(1+k)(2+k)} \right) \]

Once $v_1' > \max\{v_2',1/2\}$, the opportunity to check the remaining $k$ buyers is lost. When $k$ becomes larger, this loss cannot be neglected. The expected revenue of CWM in this structure has a measurable gap with $\overline{v} = 1$ when $k\rightarrow \infty$: $\lim_{k\rightarrow\infty} \mathbb{E}[rev^r_{\mathtt{CWM}}] = 5/6$.

However, we can have an alternative mechanism that ignores all buyers in $r_s$ and runs CWM among the rest of the buyers. It is still IC because buyers in $r_s$ are ignored and 
have no opportunity to manipulate the transaction.
Then, in this example, this alternative mechanism runs Myerson's mechanism among buyers $\{3,4,\dots, k+2\}$, whose expected revenue approaching 1 when $k\rightarrow \infty$.
\end{example}

In Example~\ref{eg:extreme}, we see the power of sacrificing the opportunities of buyers that are closer to the seller. To avoid sacrificing their opportunities completely, a direct idea is slightly increasing the reserve prices of buyers that are close to the seller. For a reported structure profile $r'$, let distance $d_i$ be the length of the shortest diffusion path from $s$ to a valid buyer $i$. We can define a monotone non-increasing \emph{shifting function} $\sigma: \mathbb{N}^* \rightarrow \mathbb{R}_{\geq 0}$, that satisfies $0\leq \sigma(d_i)\leq |\overline{v} - \underline{v}|$. The shifting function is used to define the increment of reserve prices for buyers with different distances. Then, we can define a variant of CWM with shifted reserve prices.

\begin{framed}
\noindent \textbf{CWM with Shifted Reserve Prices (CWM-SRP)}

\noindent \rule{\textwidth}{0.5pt}

\noindent \textsc{Input:} a set of buyers $N$ and their report profile $t'$.
\begin{enumerate}
\item Find the potential winner set $W(t')$, represented by $\{ w_1,$ $\dots, w_m \}$, where $w_k\in C(w_{k+1})$ for all $1\leq k<m$. 
If $W(t') = \emptyset$, then set $\pi_i(t') = 0$, $p_i(t') = 0$ for all $i\in N$ and goto \textsc{Output}.
\item For $k=1$ to $m$: if $v_{w_k}' \geq \phi_{w_k}^{-1}(0) + \sigma(d_{w_k})$, set $\pi_{w_k}(t') = 1$, $p_{w_k}(t') = \max\{p^*_{w_k}(t'), \phi_{w_k}^{-1}(0) + \sigma(d_{w_k})\}$,
and \textsc{Break}.
\item Set $\pi_i(t') = 0$, $p_i(t') = 0$ for all other $i$. 
\end{enumerate}
\noindent \textsc{Output:} the allocation $\pi$ and the payment $p$.
\end{framed}

Notice that CWM is a special member of CWM-SRP with the shifting function that always equals to 0. Since a buyer $i$ cannot control $d_i$, she cannot affect the reserve price either. Then, similar to CWM, 
CWM-SRP is still IC and IR. 
In Example~\ref{eg:better}, we can see its improvement in a simple case with 3 buyers.

\begin{example}\label{eg:better}
Consider the case in Example~\ref{eg:extreme} and let $k=1$.
The expected revenue of CWM can be calculated 
as $\mathbb{E}[rev^r_{\mathtt{CWM}}] \approx 0.5052 $.

Then, if we set a shifting function as $\sigma = 0.1\times\mathbb{I}(d_i = 1)$, where $\mathbb{I}(\cdot)$ is the indicator function (this means, for buyers in $r_s$, we increase their reserve prices by $0.1$), then the expected revenue of CWM-SRP is $\mathbb{E}[rev^r_{\mathtt{CWM}\text{-}\mathtt{SRP}}] \approx 0.5099 > 0.5052$,
which is higher than that of CWM.
\end{example}

In practice, we can set different shifting functions based on the estimations of valuations and structures. 
To provide us with ideas on how to set a shifting function, we evaluate these methods through experiments in the next section.

\section{Numerical Experiments}\label{sec:exp}
Finally, we conduct experiments to evaluate proposed mechanisms, where the valuations of buyers are drawn independently from the uniform distribution $\mathrm{U}[0, 1]$. What we evaluate and compare include 
\begin{itemize}
    \item \textbf{IDM}: the first designed diffusion auction in~\cite{li2017mechanism}, 
    \item \textbf{CWM}: CWM mechanism,
    \item \textbf{CWM-SRP1}: with $\sigma_1(d_i) = 0.1\cdot \mathbb{I}(d_i=1)$, and
    \item \textbf{CWM-SRP2}: with $\sigma_2(d_i) = 0.1(3-d_i) \cdot \mathbb{I}(d_i\leq 2)$.
\end{itemize}

\begin{figure}[t]
    \centering
    \includegraphics[width = \linewidth]{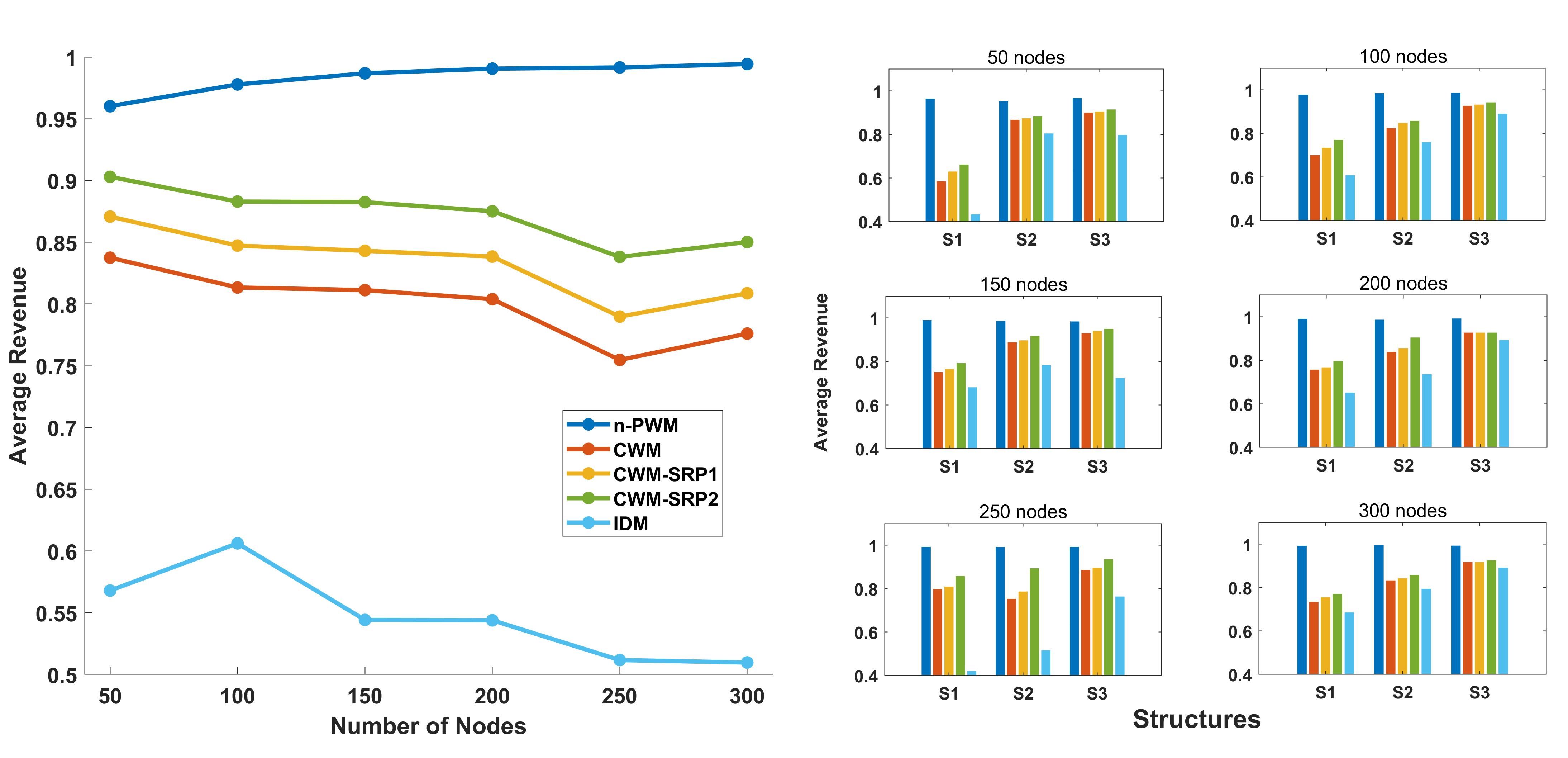}
    \caption{The left graph is the average of the estimated expected revenue over all sampled structures and the right graphs give the estimation of the expected revenue of 3 samples for each number of nodes.} 
    \label{fig:structure}
\end{figure}

In our experiments, we include the cases of structure profiles with $50\leq n\leq 300$. 
For each number of buyers, 
we randomly sample 100 possible structures, and for each structure, we sample 100 groups of valuations and use the average revenue to approximate the expected revenue. To be more practical, we generate the small-world networks, which simulate empirical social networks by a certain category of random graphs~\cite{watts1998smallworld} (it can be done by \textsf{networkx}, a Python programming language package). It has two parameters to sample the structures: the initial degree of nodes, and the probability of randomized re-connection among nodes. We demonstrate the setting with the initial degree to be $2$, and the probability of re-connection to be $0.5$. One can expect a similar result for a different choice of parameters. Figure~\ref{fig:structure} summarizes the 
results of the average revenue for different numbers of buyers. To see how the expected revenue changes in different structures, we also sampled 3 different structures for each choice of $n$ and 
their estimated expected revenue is shown. 
From the results displayed, we can make the following observations.
\begin{itemize}
    \item Compared to the CWM mechanism, both the CWM-SRP mechanisms have significantly improved the expected revenue. In particular, 
    the CWM-SRP2 always performs the best. Intuitively, with the increase in the number of buyers, the probability that buyers with high valuations are far away from the seller is higher.
    \item Different structures have different effects on the expected revenue, but the CWM-SRP2 always performs the best. Actually, for those structures where the CWM mechanism does not perform well, the improvement of the CWM-SRP mechanisms is more significant.
\end{itemize}

Overall, the CWM-SRP mechanisms can greatly improve the expected revenue, and \textbf{when the graph is larger, a more aggressive shifting function performs better}.

\section{Discussion and Future Work}\label{sec:con}
The ultimate goal in future work is optimizing the expected revenue over the complete induced type profile distribution, including both distributions over valuation and diffusion. The main difficulty is to define a reasonable distribution on different underlying structures with varying number of valid buyers. 
Possible ways include first giving a distribution on the number of valid buyers and then a distribution on the possible structures, or considering a sufficient large number of potential buyers and the probabilities of the existences between every two buyers. \textbf{In these cases, similar to Myerson's mechanism uses reserve prices to increase expected revenue, $k$-PWM indicates an idea of "reserve structures".} From this aspect, $k$-PWMs may play a more important role in investigating the ultimate goal.



\begin{ack}
This work was supported by Shanghai Frontiers Science Center of Human-centered Artificial Intelligence (ShangHAI), MoE Key Laboratory of Intelligent Perception and Human-Machine Collaboration (KLIP-HuMaCo).
\end{ack}



\bibliography{mybibfile}
\end{document}